\newtheorem{theorem}{Theorem}
\newtheorem{corollary}{Corollary}
\newtheorem{lemma}{Lemma}
\newtheorem{example}{Example}
\newtheorem{definition}{Definition}
\providecommand{\C}{{\mathcal C}}
\providecommand{\xv}{\mathbf{x}} \providecommand{\yv}{\mathbf{y}}
 \providecommand{\Xv}{\mathbf{X}}
\providecommand{\Zv}{\mathbf{Z}} \providecommand{\zv}{\mathbf{z}}
\providecommand{\Wc}{{\mathcal W}}
 \providecommand{\Rs}{{\mathbb R}}
\begin{document}


\title{Secure Nested Codes for Type II Wiretap Channels}

\author{
\authorblockN{Ruoheng Liu, Yingbin Liang, and H. Vincent Poor}
\authorblockA{Department of Electrical Engineering, Princeton University\\
Princeton, NJ 08544 \\
email: \{rliu,yingbinl,poor\}@princeton.edu}
\and
\authorblockN{Predrag~Spasojevi\'{c}}
\authorblockA{WINLAB, ECE, Rutgers University \\
North Brunswick, NJ 08902\\
email: spasojev@winlab.rutgers.edu}
\thanks{This research was supported by the National Science Foundation under Grants ANI-03-38807 and CNS-06-25637.}
}

\maketitle

\begin{abstract}

This paper considers the problem of secure coding design for a type II wiretap
channel, where the main channel is noiseless and the eavesdropper channel is a
general binary-input symmetric-output memoryless channel. The proposed secure
error-correcting code has a \emph{nested code} structure. Two secure nested
coding schemes are studied for a type II Gaussian wiretap channel. The nesting
is based on cosets of a \emph{good code} sequence for the first scheme and on
cosets of the dual of a good code sequence for the second scheme. In each case,
the corresponding achievable rate-equivocation pair is derived based on the
threshold behavior of good code sequences. The two secure coding schemes
together establish an achievable rate-equivocation region, which almost covers
the secrecy capacity-equivocation region in this case study. The proposed
secure coding scheme is extended to a type II binary symmetric wiretap channel.
A new achievable perfect secrecy rate, which improves upon the previously
reported result by Thangaraj \emph{et al.}, is derived for this channel.
\end{abstract}

\section{Introduction}

Fostered by the rapid proliferation of wireless communication devices,
technologies, and applications, the need for reliable and secure data
communication over wireless networks is more important than ever before. Due to
its broadcast nature, wireless communication is particularly susceptible to
eavesdropping. Security and privacy systems have become critical for wireless
providers and enterprise networks. The aim of this paper is to study practical
secure coding schemes for wireless communication systems.


\begin{figure}[hbt]
 \centerline{\includegraphics[width=0.8\linewidth,draft=false]{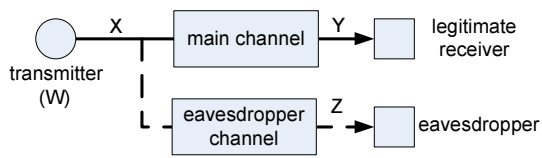}}
  \caption{\small Wiretap channel model}
  \label{fig:wtch}
\end{figure}
Shannon provided the first truly scientific treatment of secrecy in
\cite{Shannon:BSTJ:49}, where a secret key is considered to protect
confidential messages. The ingenuity of his remarkable work lies not only in
the method used therein but also in the incisive formulation that Shannon made
of the secrecy problem based on information-theoretic concepts. Later, Wyner
proposed an alternative approach to secure communication schemes in his seminal
paper \cite{Wyner:BSTJ:75}, where he introduced the so-called {\it wiretap
channel} model. As shown in Fig.~\ref{fig:wtch}, the confidential communication
via a discrete, memoryless main channel is eavesdropped upon by a wiretapper,
who has access to the degraded channel output.
Wyner demonstrated that secure communication is possible without sharing a
secret key and determined the secrecy capacity for a wiretap channel.
Construction of explicit and practical secure encoders and decoders whose
performance is as good as promised by Wyner is still an unsolved problem in the
general case, except for the binary erasure wiretap channel
\cite{Ozarow:BSTJ:84,Thangaraj:ARXIV:05,Wei:IT:91}.

We note that channel coding and secrecy coding are closely related. Roughly
speaking, the goal of channel coding is to send a message with sufficient
redundancy so that it can be understood by the receiver; whereas the goal of
secrecy coding is to provide sufficient randomness so that the message can not
be understood by anyone else. In modern communication networks,
error-correcting codes have traditionally been designed to ensure communication
reliability. Various coding techniques have been thoroughly developed and
tested for ensuring reliability of virtually all current single-user,
point-to-point physical channels. However, only very limited work has
considered ways of using error-correcting codes to also ensure security. In
\cite{Ozarow:BSTJ:84}, Ozarow and Wyner considered error-correcting code design
for a type II binary erasure wiretap channel based on a coset coding scheme.
More recently, low-density parity-check (LDPC) based coding design has been
studied for binary erasure wiretap channels in~\cite{Thangaraj:ARXIV:05}, where
the authors have also presented code constructions for a type II binary
symmetric wiretap channel based on error-detection codes. In another line of
recent related work,
secret key agreement protocols based on powerful LDPC codes have been studied
by several authors \cite{Bloch:ISIT:06,Muramatsu:TFECCS:06,Barros:ITA:07}.
Designing practical secure coding schemes for additive white Gaussian noise
(AWGN) wiretap channels, for example, is still an open problem.

In this work, we focus on secure coding schemes for a type II wiretap channel,
where the main channel is noiseless and the eavesdropper channel is a
binary-input symmetric-output memoryless (BISOM) channel. We first review and
summarize the prior results of
\cite{Wyner:BSTJ:75,Ozarow:BSTJ:84,Thangaraj:ARXIV:05}. Inspired
by~\cite{Zamir:IT:02}, we propose a more general \emph{secure nested code}
structure. Next, we consider a type II AWGN wiretap channel and describe two
secure coding schemes, both of which have a nested structure. The nesting is
based on cosets of a \emph{good code} sequence for the first scheme and on
cosets of the dual of a good code sequence for the second scheme. In each case,
we derive the corresponding achievable rate-equivocation pair based on the
threshold behavior of good code sequences \cite{mack,rpe-good-IT}.
By combining the two secure coding schemes, we establish an achievable
rate-equivocation region, which almost covers the secrecy capacity-equivocation
region for the described case study. Finally, we extend the secure coding to a
type II binary symmetric wiretap channel and derive a new achievable (perfect)
secrecy rate, which improves upon the result previously reported in
\cite{Thangaraj:ARXIV:05}.

\section{Preliminaries}

We review here some definitions and results from
\cite{Wyner:BSTJ:75,Ozarow:BSTJ:84,Thangaraj:ARXIV:05} and propose a secure
nested coding structure, which serves as preliminary material for the rest of
the paper.

\subsection{General Wiretap Channel Model}

We consider the classic wiretap channel \cite{Wyner:BSTJ:75} illustrated in
Fig.~\ref{fig:wtch}, where the transmitter sends a confidential message to a
legitimate receiver via the main channel in the presence of an eavesdropper,
who listens to the message through its own channel. Both the main and the
eavesdropper channels are discrete memoryless, and in particular, the
eavesdropper channel is a degraded version of the main channel. A confidential
message $w \in \Wc$ is mapped into a channel input sequence $\xv=[x_1, x_2,
\dots, x_n]$ of length $n$, where $\Wc=\{1,\dots,M\}$ and $M$ is the number of
distinct confidential messages that may be transmitted. The outputs from the
main channel and the eavesdropper channel are $\yv$ and $\zv$, respectively.
The level of ignorance of the eavesdropper with respect to the confidential
message is measured by the equivocation $H(W|\Zv)$. A rate-equivocation pair
$(R,R_e)$ is \emph{achievable} if there exists a rate $R$ code sequence with
the average probability of error $P_e\rightarrow 0$ as the code length $n$ goes
to infinity and with the equivocation rate $R_e$ satisfying
\[R_e\le \lim_{n\rightarrow \infty} H(W|\Zv)/n.\]
Perfect secrecy requires that, for any $\epsilon_0>0$ there exists a
sufficiently large n so that the normalized equivocation satisfies
\[H(W|\Zv)/n \ge H(W)/n-\epsilon_0 .\]
Hence, perfect secrecy happens when $R_e=R$, i.e., all the information
transmitted over the main channel is secret. The capacity-equivocation region
of the wiretap channel $X\rightarrow (Y,Z)$ \cite{Wyner:BSTJ:75} contains
rate-equivocation pairs $(R,R_e)$ that satisfy
\begin{align}
R_e\le R&\le \max_{p(x)}I(X;Y) \notag\\
0\le R_e&\le \max_{p(x)} [I(X;Y)-I(X;Z)].
\end{align}

\subsection{Wyner Codes and Secrecy Bins}

It is instructive to review first the problem of \emph{unstructured} secure
code design in terms of the stochastic encoding scheme introduced by Wyner
\cite{Wyner:BSTJ:75}. As demonstrated in~\cite{Wyner:BSTJ:75} the secrecy
capacity of the wiretap channel is achieved by using a stochastic encoder,
where a {\em mother} codebook $\C_0(n)$ of length $n$ is randomly partitioned
into ``secret bins'' or sub-codes $\{\C_1(n),C_2(n),\dots,\C_M(n)\}$. A message
$w$ is associated with a sub-code $\C_w(n)$ and the transmitted codeword is
randomly selected within the sub-code. Such codebook allows for decomposing the
twofold objective of achieving both reliability and secrecy into two separate
objectives. The mother code $\C_0(n)$ provides enough redundancy so that the
legitimate receiver can decode the message reliably, whereas each sub-code is
sufficiently large and, hence, introduces enough randomness so that the
eavesdropper's uncertainty about the transmitted message can be guaranteed.

Even though \cite{Wyner:BSTJ:75} does not describe a structured coding scheme,
it does suggest that encoding for reliability and confidentiality would be to
partition the mother code into sub-codes. This idea has been extended to
structured or semi-structured codes by using coset codes in
\cite{Ozarow:BSTJ:84,Thangaraj:ARXIV:05}.

\subsection{Secure Nested Codes}

In the following, we construct secure error-correcting codes with the {\it
nested code} structure \cite{Zamir:IT:02}.\footnote{In this paper, we consider
binary-input wiretap channels and nested linear codes. This idea can be
extended to nested lattice codes for channels with continuous inputs.}

We consider a nested linear code pair $(\C_0(n),\, \C_1(n))$, where $\C_0(n)$
is a {\it fine} code of rate $R_0$, and $\C_1(n)$ a {\it coarse} code of rate
$R_1$. We use the fine code $\C_0(n)$ as the mother code, which is partitioned
into $M$ sub-codes consisting of the coarse code $\C_1(n)$ and its cosets. Each
coset corresponds to a confidential message. The transmitter encodes a message
$w\in\Wc$ into an $n$-tuple of coded symbols randomly selected within the
corresponding coset $\C_w(n)$. By determining the coset of the transmitted
codeword, the legitimate receiver can retrieve the confidential message $w$.
The redundancies provided by each coset are used to confuse the eavesdropper
who has full knowledge about the code and its cosets. We refer to a code
structured in this manner as a \emph{secure nested code}. We note that the code
$\C_1(n)$ and its cosets have the same (Hamming) distance properties. Hence,
the secure coding design problem is to find a suitable nested code pair
$(\C_0(n),\, \C_1(n))$ that satisfies both confidentiality and reliability
requirements. Denote by $\{\C(n)\}$ a sequence of binary linear codes, where
$\C(n)$ is an $(n,\,k_n)$ code having a common rate $R_c=k_n/n$. Now, we define
the secure nested code sequence as follows.
\begin{definition} [{\bf secure code sequence}]
$\{\C_0(n),\,\C_1(n)\}$ is a secure nested code sequence if $\C_0(n)$ is a
(mother) fine code of rate $R_0$, and $\C_1(n)$ is a coarse code of rate $R_1$
so that $\C_1(n)\subseteq \C_0(n)$ and $R_1\le R_0$. The information rate of
this code sequence is $R_0-R_1$.
\end{definition}

\subsection{Good Code and Its Noise Threshold}

Following MacKay~\cite{mack}, we say that a code sequence $\{\C(n)\}$ is
\emph{good} if it achieves arbitrarily small word (bit) error probability when
transmitted over a \emph{noisy} channel at a nonzero rate $R_c$.
\emph{Capacity-achieving} codes are good codes whose rate $R_c$ is equal to the
channel capacity. The class of good codes includes, for example, turbo, LDPC,
and repeat-accumulate codes, whose performance is characterized by a
\emph{threshold} behavior in a single channel model \cite{rpe-good-IT}.
\begin{definition} [{\bf noise threshold}]
For a (single) channel model described by a single parameter, the noise
threshold of a code sequence $\{\C(n)\}$ is defined as the worst case channel
parameter value at which the word (bit) error probability decays to zero as the
codeword length $n$ increases.
\end{definition}
For example, the noise threshold is described in terms of the erasure rate
threshold $\delta^{\star}$ for a binary erasure channel (BEC) and the SNR
threshold $\lambda^{\star}$ for a binary-input AWGN (BI-AWGN) channel. Noise
thresholds associated with good codes and the corresponding maximum-likelihood
(ML), ``typical pair'', and iterative decoding algorithms have been studied in
\cite{sham:saso:IT:02,rich:urba-ldpc:IT,jin::mcel:ISITA:00}.

\subsection{Type II Wiretap Channel}

The type II wiretap channel was introduced by Ozarow and Wyner in
\cite{Ozarow:BSTJ:84} as a special binary-input wiretap channel with a
noiseless main channel.
Throughout the paper, we focus on type II wiretap channels associated with
different eavesdropper channels.
\begin{example}[{\rm BEC-WT}]
Let BEC-WT($\epsilon$) denote a binary-input wiretap channel where the main
channel is noiseless and the eavesdropper channel is a BEC with erasure rate
$\epsilon$. We refer to such a channel as the type II binary erasure wiretap
channel. The secrecy capacity of BEC-WT($\epsilon$), $C_{s,{\rm
BEC}}(\epsilon)$, equals $\epsilon$.
\end{example}

Let $\{\C^{\bot}(n)\}$ be a sequence of dual codes, where
\begin{align*}
\C^{\bot}(n)=\{\xv\in \{0,1\}^{n}\,|\,
 \xv\cdot\yv=0,~\forall~\yv\in\C(n)\}
\end{align*}
is the dual code of $\C(n)$. By employing the dual code as the coarse code in
the secure nested code structure, we reorganize the results of
\cite{Ozarow:BSTJ:84,Thangaraj:ARXIV:05} in the following lemma.


\begin{lemma} \label{lem:bec}
Consider a sequence of binary linear codes  $\{\C(n)\}$ of rate $R_c$ and
erasure rate threshold $\delta^{\star} \le 1-R_c$ (for the BEC). Let
\begin{align}
\C_0(n)=\{0,1\}^n\quad \text{and} \quad \C_1(n)=\C^{\bot}(n).
\end{align}
Suppose that the secure nested code sequence $\{\C_0(n),\,\C_1(n)\}$ is
transmitted over a BEC-WT($\epsilon$). Then, if
\begin{align}
\epsilon\ge 1-\delta^{\star}, \label{eq:becc}
\end{align}
the achievable rate-equivocation pair $(R,R_e)=(R_c,\,R_c)$.
\end{lemma}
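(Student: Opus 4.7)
The plan is to verify reliability over the noiseless main channel (which is trivial) and then establish equivocation $R_e=R_c$ at the eavesdropper (the substantive step). Let $G$ be a $k_n\times n$ generator matrix of $\C(n)$, so that $G$ also serves as a parity-check matrix for $\C^{\bot}(n)$. Identifying $\Wc$ with $\{0,1\}^{k_n}$, the $2^{nR_c}$ cosets of $\C^{\bot}(n)$ are labeled by the syndrome map $\xv\mapsto G\xv^T$; encoding $w$ then draws $\xv$ uniformly over the $2^{n-k_n}$ preimages of $w$. Since $W$ is uniform on $\Wc$, the codeword $\Xv$ is marginally uniform on $\{0,1\}^n$. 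The main channel is noiseless, so the legitimate receiver recovers $w$ error-free by computing $G\xv^T$, giving $R=R_c$ trivially.

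For the secrecy analysis, let $\mathcal{E}\subseteq[n]$ denote the random erasure pattern of the eavesdropper's BEC and $\bar{\mathcal{E}}$ its complement; both are deterministic functions of $\Zv$. Conditioning on $\mathcal{E}$ and using that $W$ is a deterministic function of $\Xv$,
\begin{align*}
H(W|\Zv,\mathcal{E})=H(\Xv_{\mathcal{E}}|\Zv,\mathcal{E})-H(\Xv_{\mathcal{E}}|W,\Zv,\mathcal{E}).
\end{align*}
Uniformity of $\Xv$ yields $H(\Xv_{\mathcal{E}}|\Zv,\mathcal{E})=|\mathcal{E}|$. Given additionally $W=w$, the admissible values of $\Xv_{\mathcal{E}}$ solve the affine system $G_{\mathcal{E}}\uv^T=w^T-G_{\bar{\mathcal{E}}}\Xv_{\bar{\mathcal{E}}}^T$, where $G_{\mathcal{E}}$ is the column submatrix of $G$ indexed by $\mathcal{E}$. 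The solution set has size $2^{|\mathcal{E}|-\mathrm{rank}(G_{\mathcal{E}})}$, so $H(\Xv_{\mathcal{E}}|W,\Zv,\mathcal{E})=|\mathcal{E}|-\mathrm{rank}(G_{\mathcal{E}})$. Combining gives $H(W|\Zv,\mathcal{E})=\mathrm{rank}(G_{\mathcal{E}})$ and, after averaging, $H(W|\Zv)=\E[\mathrm{rank}(G_{\mathcal{E}})]$.

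The final step invokes the good-code property. By rank-nullity, $\mathrm{rank}(G_{\mathcal{E}})=k_n-\dim\{\vv\in\C(n):\mathrm{supp}(\vv)\subseteq\bar{\mathcal{E}}\}$, so $\mathrm{rank}(G_{\mathcal{E}})=k_n$ exactly when no nonzero codeword of $\C(n)$ is supported inside $\bar{\mathcal{E}}$, which is precisely the condition for $\C(n)$ to correct the BEC erasure pattern $\bar{\mathcal{E}}$. Since $|\bar{\mathcal{E}}|/n$ concentrates at $1-\epsilon\le\delta^{\star}$, the threshold hypothesis ensures that the word-error probability of $\C(n)$ under erasure rate $1-\epsilon$ vanishes as $n\to\infty$. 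Hence $\mathrm{rank}(G_{\mathcal{E}})=k_n$ with probability tending to $1$, and the uniform bound $\mathrm{rank}(G_{\mathcal{E}})\le k_n$ gives $\E[\mathrm{rank}(G_{\mathcal{E}})]/n\to R_c$. Combined with $H(W|\Zv)/n\le H(W)/n=R_c$, this produces $\lim_n H(W|\Zv)/n=R_c$, so $R_e=R_c$ is achieved.

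The main obstacle will be the bridge between the probabilistic good-code statement and the algebraic rank condition: I need to identify ``BEC decodability of $\C(n)$ on erasure set $\bar{\mathcal{E}}$'' with ``the columns of $G$ indexed by $\mathcal{E}$ have full row rank $k_n$'', and then deduce from the vanishing word-error probability plus the uniform bound $\mathrm{rank}(G_{\mathcal{E}})\le k_n$ that the normalized expected rank tends to $R_c$. Once this linear-algebraic dictionary is in place, the rest reduces to the chain rule and the uniformity of $\Xv$.
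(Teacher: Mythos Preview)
The paper does not supply its own proof of this lemma: it is stated as a reorganization of results from \cite{Ozarow:BSTJ:84,Thangaraj:ARXIV:05}, so there is no in-paper argument to compare against directly. Your proof is correct and is essentially the argument from those references, in particular the Thangaraj \emph{et al.} approach: the exact identity $H(W|\Zv)=\E[\mathrm{rank}(G_{\mathcal{E}})]$ for linear coset coding over the BEC (which goes back to Ozarow--Wyner), combined with the observation that full row rank of $G_{\mathcal{E}}$ is equivalent to successful ML erasure decoding of $\C(n)$ when the erasure pattern is the complement $\bar{\mathcal{E}}$, so that the threshold hypothesis $1-\epsilon\le\delta^{\star}$ forces $\mathrm{rank}(G_{\mathcal{E}})=k_n$ with probability tending to one. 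Your write-up makes the linear-algebraic dictionary explicit (left-kernel of $G_{\mathcal{E}}$ $\leftrightarrow$ codewords of $\C(n)$ supported on $\bar{\mathcal{E}}$), which is exactly what the cited works use; nothing further is needed.
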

Lemma~\ref{lem:bec} illustrates that one can design practical secure codes to
achieve perfect secrecy with a certain transmission rate (below the secrecy
capacity) for a BEC-WT. The condition (\ref{eq:becc}) implies that to achieve
the secrecy capacity, the coding scheme requires a capacity-achieving code
sequence as the dual code of the coarse code.

Two capacity-achieving LDPC code sequences for BECs have been described in
\cite{Oswald:IT:02}, called the \emph{Tornado sequence} $\{\C_{\rm T}(n)\}$ and
the \emph{right-regular sequence} $\{\C_{\rm R}(n)\}$. For both of these
sequences, the erasure rate threshold $\delta^{\star}=1-R_c=\epsilon$.
\begin{corollary}
Consider LDPC code sequences $\{\C_{\rm T}(n)\}$ and $\{\C_{\rm R}(n)\}$ of
rate $R_c$. Let $\C_0(n)=\{0,1\}^n$ and
\begin{align}
\C_1(n)=\C_{T}^{\bot}(n) \quad\text{or} \quad
\C_1(n)=\C_{R}^{\bot}(n).
\end{align}
Then, the secure nested code sequence $\{\C_0(n),\,\C_1(n)\}$ achieves the
secrecy capacity of BEC-WT($1-R_c$).
\end{corollary}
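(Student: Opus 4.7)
The plan is to obtain the corollary as a direct instance of Lemma~\ref{lem:bec}. The paragraph preceding the corollary records that both $\{\C_T(n)\}$ and $\{\C_R(n)\}$ are capacity-achieving LDPC code sequences for the BEC, with erasure-rate threshold $\delta^{\star} = 1 - R_c$ (this is the content of \cite{Oswald:IT:02}). Consequently, the hypothesis $\delta^{\star} \le 1 - R_c$ of Lemma~\ref{lem:bec} is satisfied with equality for either choice $\C(n) = \C_T(n)$ or $\C(n) = \C_R(n)$, and the coarse code $\C_1(n) = \C^{\bot}(n)$ then becomes $\C_T^{\bot}(n)$ or $\C_R^{\bot}(n)$, as stipulated by the corollary.

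Next, I would substitute $\delta^{\star} = 1 - R_c$ into the sufficient condition~(\ref{eq:becc}), which pins down the precise BEC-WT parameter at which~(\ref{eq:becc}) becomes tight. Lemma~\ref{lem:bec} then delivers the achievable rate-equivocation pair $(R, R_e) = (R_c, R_c)$ for the nested code sequence on that channel. To finish, I would compare this pair against the secrecy capacity of BEC-WT($\epsilon$) from Example~1, namely $C_{s,\rm BEC}(\epsilon) = \epsilon$, which shows that $(R_c, R_c)$ lies exactly on the secrecy-capacity curve and hence yields perfect secrecy at capacity.

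There is no substantive obstacle to this argument: the hard work of converting the threshold property of $\{\C(n)\}$ into a coset-secrecy guarantee is fully encapsulated in Lemma~\ref{lem:bec}, and the Oswald construction supplies exactly the capacity-achieving property needed to make that hypothesis tight. The only point that requires care is the parameter matching between the code rate $R_c$, its threshold $\delta^{\star} = 1 - R_c$, and the BEC-WT parameter, so that the substitution into~(\ref{eq:becc}) produces a condition that is tight at precisely the wiretap-channel value for which the secrecy capacity coincides with the achieved rate $R_c$.
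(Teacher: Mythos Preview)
Your proposal is correct and matches the paper's approach: the paper gives no explicit proof for the corollary, treating it as an immediate consequence of Lemma~\ref{lem:bec} together with the capacity-achieving property $\delta^{\star}=1-R_c$ recorded in the preceding paragraph. Your caution about the parameter matching is well placed, since that bookkeeping is the only content of the argument once Lemma~\ref{lem:bec} is in hand.
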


\section{Main Results}
In this section, we consider practical coding design for secure communication
over a type II AWGN wiretap channel.
\begin{figure}[hbt]
 \centerline{\includegraphics[width=0.95\linewidth,draft=false]{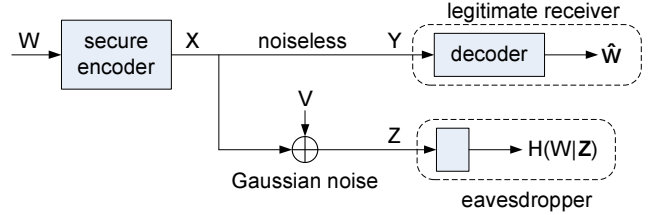}}
  \caption{\small  Type II AWGN wiretap channel}
  \label{fig:awgnwt}
\end{figure}
As shown in Fig.~\ref{fig:awgnwt}, the eavesdropper channel is a BI-AWGN
channel characterized by transition probabilities
\begin{align}
g(z|X=1)&=\frac{1}{\sqrt{2\pi}}\exp\left[\frac{-(z+\sqrt{2\lambda})^2}{2}\right] \notag \\
\text{and} \qquad
g(z|X=-1)&=\frac{1}{\sqrt{2\pi}}\exp\left[\frac{-(z-\sqrt{2\lambda})^2}{2}\right]
\label{eq:gfunc}
\end{align}
where $\lambda =E_s/{N_0}$ is the ratio of the energy per coded symbol to the
one-sided spectral noise density, which is referred to as the SNR of the
eavesdropper channel. We denote this channel with AWGN-WT($\lambda$). The
capacity-equivocation region of AWGN-WT($\lambda$) contains rate-equivocation
pairs $(R,R_e)$ that satisfy
\begin{align}
R_e\le R&\le  1 \notag\\
0\le R_e&\le  1-C_{\rm BI-AWGN}(\lambda) \label{eq:awgn-re}
\end{align}
where
\begin{align}
C_{\rm BI-AWGN}&(\lambda)=1-\notag\\
&
\frac{1}{\sqrt{\pi}}\int_{-\infty}^{+\infty}e^{-(y-\sqrt{\lambda})^2}
\log_2\bigl(1+e^{-4y\sqrt{\lambda}}\bigr)\,dy \label{eq:C-awgn}
\end{align}
is the channel capacity of BI-AWGN channel with SNR $\lambda$.

In the following, we consider two approaches to designing secure codes, both of
which have a nested structure.  In each case, we derive the corresponding
achievable rate-equivocation pair based on the threshold behavior of good
codes~\cite{rpe-good-IT}.

We note that even for a general BI-AWGN channel (without a secrecy constraint),
designing practical capacity-achieving codes is still an open problem. Hence,
to allow secure codes to be implementable, we either loosen the perfect secrecy
requirement (allow for a nonzero gap between the transmission rate and the
equivalent rate) or reduce the transmission rate compared with the capacity. In
the first approach, we construct practical codes ensuring an equivocation rate
that is below the transmission rate; whereas, in the second approach, we design
secure codes to achieve perfect secrecy with a transmission rate that is below
the secrecy capacity. We summarize code designs and the corresponding
achievable rate-equivocation pair as follows.

\subsection{Approach I: Good Coarse Code}

In Approach I, we use a good code as the coarse code $\C_1(n)$.

\begin{theorem} \label{th:awgn-1}
Consider a sequence of secure nested codes $\{\C_0(n),\,\C_1(n)\}$, where
$\C_0(n)=\{0,1\}^n$ and $\{\C_1(n)\}$ is a good binary linear code sequence of
rate $R_1$ and SNR threshold $\lambda^{\star}$ (for BI-AWGN channels). Suppose
that the secure nested code sequence $\{\C_0(n),\,\C_1(n)\}$ is transmitted
over AWGN-WT($\lambda$). Then, if $\lambda\ge \lambda^{\star}$, the
rate-equivocation pair
\begin{align}
(R,R_e)=\bigl(1-R_1,\,1-C_{\rm BI-AWGN}(\lambda)\bigr)
\end{align}
is achievable.
\end{theorem}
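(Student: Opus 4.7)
The plan is to verify reliability trivially and then use the coset decomposition $H(W|\Zv)=H(\Xv|\Zv)-H(\Xv|W,\Zv)$ to bound the equivocation. First I would set up the encoding: the message $w\in\Wc$ selects one of the $2^{n(1-R_{1})}$ cosets $\C_{w}(n)$ of $\C_{1}(n)$ inside $\{0,1\}^{n}$, giving rate $R=1-R_{1}$; the transmitter picks $\Xv$ uniformly within $\C_{w}(n)$. Because the unions of all cosets is $\{0,1\}^{n}$ and each coset is chosen with probability $2^{-n(1-R_{1})}$, the induced distribution on $\Xv$ is uniform on $\{0,1\}^{n}$, so $H(\Xv)=n$. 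Reliability is immediate: the main channel delivers $\xv$ noiselessly, so the receiver identifies the unique coset containing $\xv$ and recovers $w$ with zero error.

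The equivocation analysis is the core of the proof. Since $W$ is a deterministic function of $\Xv$ (the coset index), expanding $H(W,\Xv|\Zv)$ two ways gives
\begin{equation}
H(W|\Zv)=H(\Xv|\Zv)-H(\Xv|W,\Zv).
\end{equation}
For the first term, I use that $\Xv$ is uniform on $\{0,1\}^{n}$ and the eavesdropper channel is a memoryless BI-AWGN; by symmetry the capacity of each coordinate channel is $C_{\rm BI\text{-}AWGN}(\lambda)$ and is attained by the uniform input, so
\begin{equation}
H(\Xv|\Zv)=n-I(\Xv;\Zv)=n\bigl(1-C_{\rm BI\text{-}AWGN}(\lambda)\bigr).
\end{equation}

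The main obstacle, and the step that forces the hypothesis $\lambda\ge\lambda^{\star}$, is bounding $H(\Xv|W,\Zv)=o(n)$. Conditioned on $W=w$, the input $\Xv$ is uniform on the coset $\C_{w}(n)$, which has the same Hamming-distance profile as the linear code $\C_{1}(n)$; hence transmission over the BI-AWGN eavesdropper channel is equivalent to sending a codeword of the good code $\C_{1}(n)$ (after an affine translation known to the eavesdropper). By the definition of the SNR threshold, since $\lambda\ge\lambda^{\star}$, maximum-likelihood (or a suitable iterative) decoding on each coset yields word error probability $P_{e}^{(n)}\to 0$ as $n\to\infty$. Applying Fano's inequality coset-by-coset and averaging over $W$,
\begin{equation}
H(\Xv|W,\Zv)\le 1+P_{e}^{(n)}\,nR_{1}=o(n).
\end{equation}

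Combining the three displays gives
\begin{equation}
\lim_{n\to\infty}\frac{H(W|\Zv)}{n}\ge 1-C_{\rm BI\text{-}AWGN}(\lambda),
\end{equation}
so the pair $(R,R_{e})=\bigl(1-R_{1},\,1-C_{\rm BI\text{-}AWGN}(\lambda)\bigr)$ is achievable. The only subtle point is justifying that the threshold property of $\{\C_{1}(n)\}$ transfers to its cosets; this is standard because for a linear code on a symmetric channel the error probability is invariant under a coset shift (or, equivalently, one may view the shift as being absorbed into the noise by an affine change of variables at the decoder), so the threshold $\lambda^{\star}$ governs decoding on every coset uniformly.
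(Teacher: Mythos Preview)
Your proof is correct and follows essentially the same approach as the paper: reliability is trivial from the noiseless main channel, and the equivocation is lower-bounded by combining $H(\Xv)=n$, $I(\Xv;\Zv)\le nC_{\rm BI-AWGN}(\lambda)$, and a Fano bound on $H(\Xv|W,\Zv)$ driven by the threshold property of $\{\C_1(n)\}$ together with coset-shift invariance on a symmetric channel. Your identity $H(W|\Zv)=H(\Xv|\Zv)-H(\Xv|W,\Zv)$, obtained from the fact that $W$ is a deterministic function of $\Xv$, is a slightly cleaner packaging than the paper's chain of inequalities, but the substance of the argument is identical.
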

Theorem~\ref{th:awgn-1} is proved in Appendix~\ref{sec:pft1}. Note that if the
code sequence $\C_1(n)$ is not a capacity-achieving sequence, then
$$R_1<C_{\rm BI-AWGN}(\lambda^{\star})\le C_{\rm BI-AWGN}(\lambda).$$
The gap between the rate $R_1$ and the capacity $C_{\rm
BI-AWGN}(\lambda^{\star})$ implies $R_e\le R$. Hence this approach cannot
achieve perfect secrecy when using non capacity-achieving sequences.

\begin{example} \label{ex:ldpc36}
Consider a sequence of $(4,6)$ regular LDPC codes $\{C_{\rm LDPC}(n)\}$
\cite{gallager_LDPC}. Let
\[\C_0(n)=\{0,1\}^n\quad \text{and} \quad \C_1(n)=\C_{\rm LDPC}(n).\]
The design rate of $\C_{\rm LDPC}(n)$ is  $R_1=1/3$. The SNR threshold of
$\{\C_{\rm LDPC}(n)\}$ satisfies $\lambda^{\star}\le 0.302$
under typical pair decoding \cite{jin::mcel:ISITA:00} (and hence, under ML
decoding). Assume that the secure code $\{\C_0(n),\,\C_1(n)\}$ is transmitted
over AWGN-WT($\lambda=0.302$). The achievable rate-equivocation pair is given
by
\begin{align*}
(R,R_e)&=\bigl(1-R_1,\,1-C_{\rm BI-AWGN}(0.302)\bigr)\\
&=(2/3,\,0.663).
\end{align*}
In this case, the gap between the transmission rate and the equivalent rate is
less then $0.004$.
\end{example}
Moreover, Approach I can be extended to the general AWGN wiretap channel (the
main channel is also a BI-AWGN channel) by constructing a nested LDPC code
pair.

\subsection{Approach II: Dual Good Code as Coarse Code}

In Approach II, we use the dual code of a good code as the coarse code
$\C_1(n)$. Let
\begin{align}
Q(x)=\int_{x}^{\infty}\frac{1}{\sqrt{2\pi}}\exp\left(-\frac{z^2}{2}\right)\,dz.
\end{align}

\begin{theorem} \label{th:awgn-2}
Consider a sequence of good binary linear codes  $\{\C(n)\}$ of rate $R_c$ and
erasure rate threshold $\delta^{\star}$ (for BECs). Let
\begin{align}
\C_0(n)=\{0,1\}^n\quad \text{and} \quad \C_1(n)=\C^{\bot}(n).
\end{align}
Suppose that the secure nested code sequence $\{\C_0(n),\,\C_1(n)\}$ is
transmitted over an AWGN-WT($\lambda$). Then, if
\begin{align}
Q(\sqrt{2\lambda})\ge (1-\delta^{\star})/2, \label{eq:con-awgn}
\end{align}
the rate-equivocation pair $(R,R_e)=(R_c,\,R_c)$ is achievable.
\end{theorem}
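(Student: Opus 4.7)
The plan is to reduce the AWGN-WT($\lambda$) case to the BEC-WT case of Lemma~\ref{lem:bec} by showing that the eavesdropper's BI-AWGN channel is a stochastic degradation of a BEC with erasure rate $2Q(\sqrt{2\lambda})$. Reliability is immediate: since $\C_0(n)=\{0,1\}^n$, the main channel delivers $\xv$ noiselessly to the legitimate receiver, who recovers $w$ from its coset in $\C^{\bot}(n)$ by computing the syndrome with respect to a parity-check matrix of $\C^{\bot}(n)$ (equivalently, a generator matrix of $\C(n)$); this yields rate $R_0-R_1=R_c$ with zero error probability.

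The core of the secrecy argument is a single-letter channel comparison. Setting $p=Q(\sqrt{2\lambda})$, the BI-AWGN likelihoods of \eqref{eq:gfunc} obey the total-variation identity $\int\min\{g(z|{+}1),g(z|{-}1)\}\,dz=2p$. I would use this to define a stochastic kernel $q(z|y)$ on $y\in\{-1,+1,e\}$ by
\begin{align*}
q(z|e)&=\tfrac{1}{2p}\min\{g(z|{+}1),g(z|{-}1)\},\\
q(z|y)&=\tfrac{1}{1-2p}[g(z|y)-g(z|{-}y)]^{+}\quad\text{for }y=\pm 1,
\end{align*}
and check that these are bona fide probability densities satisfying the cascade relation $(1-2p)\,q(z|{\pm}1)+2p\,q(z|e)=g(z|{\pm}1)$. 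All three checks (nonnegativity, normalization, cascade) reduce either to routine algebra or to the total-variation identity above, so BI-AWGN($\lambda$) factors as a BEC with erasure rate $\epsilon=2p$ followed by the memoryless kernel $q$; the $n$-letter degradation follows by taking products.

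With the degradation established, introduce the hypothetical BEC($2p$) output $\yv$ between $\xv$ and $\zv$, yielding a Markov chain $W\to\xv\to\yv\to\zv$. Data processing gives $I(W;\zv)\le I(W;\yv)$. The theorem's hypothesis $Q(\sqrt{2\lambda})\ge(1-\delta^{\star})/2$ is exactly $2p\ge 1-\delta^{\star}$, which is precisely the hypothesis of Lemma~\ref{lem:bec} applied to the same nested pair $(\{0,1\}^n,\C^{\bot}(n))$ transmitted over BEC-WT($2p$). Lemma~\ref{lem:bec} then gives $\lim_{n\to\infty}H(W|\yv)/n=R_c$, and the data-processing inequality carries this to $\lim_{n\to\infty}H(W|\zv)/n=R_c$. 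Combined with the reliability step, this yields the achievable pair $(R,R_e)=(R_c,R_c)$.

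The main obstacle I expect is the explicit construction and verification of the single-letter kernel $q$---nonnegativity, normalization, and the cascade identity---which is really a short computation once one has the total-variation identity $\int\min\{g(z|{+}1),g(z|{-}1)\}\,dz=2Q(\sqrt{2\lambda})$; the remaining steps (data processing, memoryless lifting, and the invocation of Lemma~\ref{lem:bec}) are routine.
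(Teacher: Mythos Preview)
Your proposal is correct and follows essentially the same route as the paper's proof in Appendix~\ref{sec:pft2}: show that the BI-AWGN eavesdropper channel is a stochastic degradation of a BEC with erasure rate $\epsilon=2Q(\sqrt{2\lambda})=\int\min\{g(z|{+}1),g(z|{-}1)\}\,dz$, construct the explicit single-letter kernel realizing the degradation, invoke the data-processing inequality on the Markov chain $W\to\Zv'\to\Zv$, and then apply Lemma~\ref{lem:bec}. Your kernel written with $[\cdot]^+$ and $\min$ is exactly the paper's piecewise definition \eqref{eq:zzf} in a sign-convention-free form, so there is no substantive difference between the two arguments.
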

We provide the proof in Appendix~\ref{sec:pft2}. Theorem~\ref{th:awgn-2}
illustrates that $R_e=R$ if the eavesdropper channel SNR $\lambda$ satisfies
the condition (\ref{eq:con-awgn}). Hence we can achieve perfect secrecy without
using capacity-achieving codes.

\begin{example}
We use a sequence of $(4,6)$ regular LDPC codes $\{C_{\rm LDPC}(n)\}$ of rate
$R_c=1/3$ as in Example~\ref{ex:ldpc36}. Let
\[\C_0(n)=\{0,1\}^n\quad \text{and} \quad \C_1(n)=\C^{\bot}_{\rm LDPC}(n).\]
The erasure rate threshold of $\{\C_{\rm LDPC}(n)\}$ is lower-bounded as
$\delta^{\star}\ge 0.665$ under typical pair decoding \cite{khandekar:phd} (and
hence, under ML decoding). Assume that the secure code sequence
$\{\C_0(n),\,\C_1(n)\}$ is transmitted over AWGN-WT($\lambda \ge 0.465$). Since
$$2Q(\sqrt{2\lambda})\ge 0.335\ge 1-\delta^{\star},$$
Theorem~\ref{th:awgn-2} implies that this code sequence can achieve perfect
secrecy at the transmission rate $1/3$.
\end{example}

\begin{corollary} \label{cor:awgn2}
Consider LDPC code sequences $\{\C_{\rm T}(n)\}$ and $\{\C_{\rm R}(n)\}$. Let
$\C_0(n)=\{0,1\}^n$ and
\begin{align}
\C_1(n)=\C_{\rm T}^{\bot}(n) \quad\text{or} \quad \C_1(n)=\C_{\rm R}^{\bot}(n).
\end{align}
Assume that the nested code sequence $\{\C_0(n),\,\C_1(n)\}$ is transmitted
over AWGN-WT($\lambda$). Then, perfect secrecy can be achieved at (and below)
the transmission rate $2Q(\sqrt{2\lambda})$.
\end{corollary}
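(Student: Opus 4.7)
The plan is to derive the corollary as a direct instantiation of Theorem~\ref{th:awgn-2} using the known properties of the Tornado and right-regular LDPC sequences on BECs. Recall from the earlier discussion that both $\{\C_{\rm T}(n)\}$ and $\{\C_{\rm R}(n)\}$ are capacity-achieving over the BEC, meaning that at any design rate $R_c$ their erasure rate threshold attains the Shannon limit:
\begin{equation}
\delta^{\star} = 1 - R_c.
\end{equation}
This is precisely the ingredient that makes Theorem~\ref{th:awgn-2} bite hardest, since $1-\delta^{\star}$ becomes as small as possible for a given rate.

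Next, I would plug this into the sufficient condition (\ref{eq:con-awgn}) of Theorem~\ref{th:awgn-2}. With $\C_0(n)=\{0,1\}^n$ and $\C_1(n)$ chosen as $\C_{\rm T}^{\bot}(n)$ or $\C_{\rm R}^{\bot}(n)$, the nested pair has information rate $R_0-R_1 = 1-(1-R_c) = R_c$ and satisfies the hypotheses of Theorem~\ref{th:awgn-2}. The condition $Q(\sqrt{2\lambda}) \ge (1-\delta^{\star})/2$ becomes
\begin{equation}
Q(\sqrt{2\lambda}) \ge R_c/2, \qquad \text{i.e.,} \qquad R_c \le 2Q(\sqrt{2\lambda}).
\end{equation}
Whenever the design rate $R_c$ satisfies this inequality, Theorem~\ref{th:awgn-2} guarantees the achievability of $(R,R_e)=(R_c,R_c)$, which is perfect secrecy at transmission rate $R_c$.

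Finally, I would note that the Tornado and right-regular constructions of \cite{Oswald:IT:02} exist at every rate $R_c\in(0,1)$, so $R_c$ can be taken arbitrarily close to $2Q(\sqrt{2\lambda})$ from below. Taking the supremum over admissible $R_c$ yields the claim that perfect secrecy is achievable at (and below) the rate $2Q(\sqrt{2\lambda})$.

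I do not expect any real obstacle here: the proof is essentially a substitution of the capacity-achieving BEC property $\delta^{\star}=1-R_c$ into Theorem~\ref{th:awgn-2}. The only subtlety worth flagging is that one must use the capacity-achieving property of these LDPC sequences \emph{on the BEC}, not on the BI-AWGN channel; the BEC threshold governs the equivocation analysis in Approach~II because the coset-indexing side information at the eavesdropper is handled via the dual-code/BEC-threshold argument used in the proof of Theorem~\ref{th:awgn-2}.
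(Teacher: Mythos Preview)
Your proposal is correct and matches the paper's approach: the corollary is presented in the paper as an immediate specialization of Theorem~\ref{th:awgn-2} to the capacity-achieving LDPC sequences $\{\C_{\rm T}(n)\}$ and $\{\C_{\rm R}(n)\}$, using $\delta^{\star}=1-R_c$ so that condition~(\ref{eq:con-awgn}) becomes $R_c\le 2Q(\sqrt{2\lambda})$. Your remark about varying $R_c$ over $(0,1)$ to cover all rates up to $2Q(\sqrt{2\lambda})$ makes explicit a step the paper leaves implicit.
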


Corollary~\ref{cor:awgn2} implies that the gap between the secrecy capacity
(\ref{eq:awgn-re}) and the achievable (perfect) secrecy rate is
\begin{align*}
\Delta=1-C_{\rm BI-AWGN}(\lambda)-2Q(\sqrt{2\lambda}).
\end{align*}
The gap $\Delta$ can be reduced if one can find a tighter sufficient condition
than (\ref{eq:con-awgn}).

\subsection{Achievable Rate-Equivocation Region}
Now, we consider the achievable rate-equivocation region based on practical
codes for AWGN-WT($\lambda$). For a given channel SNR $\lambda$, we choose a
good code sequence $\{\C(n)\}$ of rate $R^{\star}$ so that its SNR threshold
$\lambda^{\star}\le\lambda$. Let $\C_0(n)=\{0,1\}^{n}$ and select $\C_1(n)$
from
$$\{0\}^{n}, ~ \{0,1\}^{n}, ~\C(n),~ \text{and} ~\C_{\rm R}^{\bot}(n)$$
corresponding to different equivocation rate requirements. By using a
time-sharing strategy, we can show that the secure coding scheme achieves the
rate-equivocation region
\begin{align}
\Rs_{\rm AWT}= \text{convex hull} \left\{
                                        \begin{array}{c}
                                          (0,0),\\
                                          \bigl(2Q(\sqrt{2\lambda}),2Q(\sqrt{2\lambda})\bigl), \\
                                          \bigl(1-R_1,1-C_{\rm BI-AWGN}(\lambda)\bigr), \\
                                          \bigl(1,1-C_{\rm BI-AWGN}(\lambda)\bigr),  ~(1,0)
                                        \end{array}
                                      \right\}. \notag
\end{align}

\begin{example} \label{ex:GSER}
\begin{figure}[hbt]
\centerline{\includegraphics[width=0.9\linewidth,draft=false]{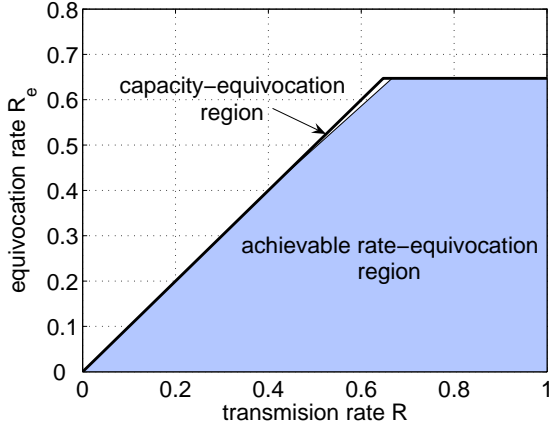}}
  \caption{\small Achievable rate-equivocation region for AWGN-WT($0.32$)}
  \label{fig:ach-region}
\end{figure}
Consider AWGN-WT($\lambda=0.32$) and the good code sequence $\{\C(n)\}=\{C_{\rm
LDPC}(n)\}$ described in Example~\ref{ex:ldpc36}, whose SNR threshold is
bounded as
$$\lambda^{\star}\le 0.302 <0.32=\lambda.$$
Fig.~\ref{fig:ach-region}. depicts the region $\Rs_{\rm AWT}$ and compares it
with the capacity-equivocation region for AWGN-WT($0.32$).
\end{example}

\section{Type II Binary Symmetric Wiretap Channel}

In this section, we study the type II binary symmetric wiretap channel. This
channel was studied previously in \cite{Thangaraj:ARXIV:05} and an achievable
secrecy rate based on error-detecting codes was given. In the following, we
apply the coding technique in Approach II and obtain an improved secrecy rate
with respect to the result in \cite{Thangaraj:ARXIV:05}.

Let BSC-WT($q$) be a type II binary symmetric wiretap channel, where the
eavesdropper channel is a binary symmetric channel (BSC) with crossover rate
$q$. The secrecy capacity of BSC-WT($q$) $C_{s,{\rm BSC}}(q)=h(q)$, where
$h(q)$ is a binary entropy function. We first summarize the result of
\cite{Thangaraj:ARXIV:05} in the following lemma.

\begin{lemma} \label{lem:bsc-old}
Consider a sequence of error-detecting codes $\{\C_{\rm D}(n)\}$ of rate $R_1$,
whose detection error rate is less than $2^{-nR_1}$. Let $\C_0(n)=\{0,1\}^n$
and $\C_1(n)=\C_{\rm D}(n).$ Assume that the nested code sequence
$\{\C_0(n),\,\C_1(n)\}$ is transmitted over a BSC-WT($q$). The maximum possible
secrecy rate that can be achieved by this construction is $-\log_2(1-q)$.
\end{lemma}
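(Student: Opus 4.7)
The strategy is to express the equivocation in terms of the distribution of the BSC noise over the cosets of $\C_{\rm D}(n)$, and then to isolate the unavoidable probability mass placed on the zero coset by the no-noise event to obtain the stated rate bound.

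First I would set up the encoding. Because each message $w$ indexes a coset $\C_w(n)$ of $\C_1(n)=\C_{\rm D}(n)$ inside $\C_0(n)=\{0,1\}^n$ and $W$ is uniform on the $M=2^{n(1-R_1)}$ cosets, the transmitted word $\xv=c_w\oplus \vv$ (with $\vv$ uniform on $\C_{\rm D}(n)$) is in fact uniform over $\{0,1\}^n$. The eavesdropper sees $\zv=\xv\oplus \mathbf{n}$ where $\mathbf{n}$ is i.i.d.\ Bernoulli$(q)$. Let $\pi:\{0,1\}^n\to \{0,1\}^n/\C_{\rm D}(n)$ be the coset map; by linearity of $\C_{\rm D}(n)$, $\pi(\zv)=W\oplus\pi(\mathbf{n})$.

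Next I would establish the key inequality
\begin{align}
H(W|\Zv)\le H(W\,|\,\pi(\Zv))=H(\pi(\mathbf{n})), \notag
\end{align}
where the first step follows since $Z$ determines $\pi(Z)$, and the equality uses that $W$ is uniform on the $M$ cosets and independent of $\mathbf{n}$, so $H(W\,|\,W\oplus\pi(\mathbf{n}))=H(\pi(\mathbf{n}))$. For perfect secrecy at message rate $R=1-R_1$ this bound must be essentially tight, which forces $\pi(\mathbf{n})$ to be asymptotically uniform over the $M$ cosets.

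The decisive observation is that the zero coset is $\C_{\rm D}(n)$ itself and hence contains $\mathbf{0}$, so
\begin{align}
\Pr\{\pi(\mathbf{n})=0\}=(1-q)^n + P_{ue}\ge (1-q)^n, \notag
\end{align}
where $P_{ue}=\Pr\{\mathbf{n}\in\C_{\rm D}(n)\setminus\{\mathbf{0}\}\}<2^{-nR_1}$ by hypothesis. Asymptotic uniformity forces $\Pr\{\pi(\mathbf{n})=0\}\le (1+o(1))\cdot 2^{-n(1-R_1)}$, so to leading exponential order $(1-q)^n\le 2^{-n(1-R_1)}$, which rearranges to $1-R_1\le -\log_2(1-q)$. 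The bound is tight: choosing $R_1=1+\log_2(1-q)$ makes the mandatory atom $(1-q)^n$ match $2^{-n(1-R_1)}$, and the hypothesized $P_{ue}<2^{-nR_1}$ is $o\!\left(2^{-n(1-R_1)}\right)$ at this rate (since $2(1-q)^2<1$ need not hold, one checks the exponents directly), so a construction whose nonzero coset masses are balanced by the design of $\C_{\rm D}(n)$ approaches the secrecy rate $-\log_2(1-q)$.

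The main obstacle I anticipate is making the ``asymptotic uniformity'' step quantitative: one must argue that any non-vanishing excess of $\Pr\{\pi(\mathbf{n})=0\}$ above $2^{-n(1-R_1)}$ produces a non-vanishing entropy deficit. I would handle this with the single-atom entropy bound
\begin{align}
H(\pi(\mathbf{n}))\le h(p_0)+(1-p_0)\log_2(M-1), \notag
\end{align}
with $p_0=\Pr\{\pi(\mathbf{n})=0\}$, from which the requirement $H(\pi(\mathbf{n}))/n\to 1-R_1$ translates directly into $p_0\le 2^{-n(1-R_1)(1+o(1))}$, closing the upper-bound argument.
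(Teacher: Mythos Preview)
The paper does not prove this lemma; it is stated explicitly as a summary of the result in \cite{Thangaraj:ARXIV:05}, so there is no in-paper proof to compare against. In context the lemma records the achievable secrecy rate obtained by the error-detection construction of \cite{Thangaraj:ARXIV:05}, with ``maximum possible'' referring to what that analysis yields, not to a general converse for coset schemes.

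Your argument aims at a converse and has a genuine gap in the final quantitative step. From the single-atom bound
\[
H(\pi(\mathbf{n}))\le h(p_0)+(1-p_0)\log_2(M-1),\qquad M=2^{n(1-R_1)},
\]
the weak-secrecy requirement $H(\pi(\mathbf{n}))/n\to 1-R_1$ yields only
\[
p_0\,(1-R_1)\,n\le h(p_0)+o(n),
\]
which forces $p_0\to 0$ but imposes \emph{no} exponential rate: any $p_0=2^{-\alpha n}$ with $\alpha>0$, or even $p_0=1/n$, satisfies it. Consequently you cannot deduce $p_0\le 2^{-n(1-R_1)(1+o(1))}$, and the comparison with $(1-q)^n$ that would give $1-R_1\le -\log_2(1-q)$ is unsupported. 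Under the normalized-equivocation secrecy criterion used in this paper, a single exponentially small atom at the zero coset simply does not create a first-order entropy deficit, so this route cannot produce the claimed rate ceiling. The achievability half of your sketch (that some $\C_{\rm D}(n)$ balances the nonzero coset masses at rate $-\log_2(1-q)$) is asserted but not argued; that is the part the cited reference actually establishes.
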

The authors of \cite{Thangaraj:ARXIV:05} have also stated that error-detecting
codes include Hamming codes and double-error-correcting BCH codes; however,
most known classes of error-detecting codes have $R_1=0$. Hence, the
implementation of such secure codes described in Lemma~\ref{lem:bsc-old} is
still an open problem.

Following Approach II, we construct implementable perfect secrecy nested codes
for BSC-WT($q$) as follows.
\begin{theorem} \label{th:bsc-new}
Consider a sequence of good binary linear codes  $\{\C(n)\}$ of rate $R_c$ and
erasure rate threshold $\delta^{\star}$ (for BECs). Let $\C_0(n)=\{0,1\}^n$ and
$\C_1(n)=\C^{\bot}(n). $ Suppose that the secure nested code sequence
$\{\C_0(n),\,\C_1(n)\}$ is transmitted over an BSC-WT($q$). Then, if
\begin{align}
q\ge (1-\delta^{\star})/2, \label{eq:con-bsc}
\end{align}
the rate-equivocation pair $(R,R_e)=(R_c,\,R_c)$ is achievable.
\end{theorem}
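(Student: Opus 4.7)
The plan is to establish Theorem~\ref{th:bsc-new} as a corollary of Lemma~\ref{lem:bec} by exploiting the fact that BSC$(q)$ is a stochastically degraded version of BEC$(2q)$. Concretely, a sample of the BSC$(q)$ output can be synthesized from a sample of the BEC$(2q)$ output by passing the latter through a memoryless ``fill-in'' channel: an unerased output bit is kept as-is, while an erasure is replaced by an independent fair coin flip. Unerased bits are correct and erased-then-flipped bits err with probability $1/2$, so the composed end-to-end channel is a BSC with crossover probability $\tfrac{1}{2}\cdot 2q = q$. Because the fill-in randomization is independent of everything upstream, letting $\Ycc$ denote the BEC$(2q)$ output and $\Zv$ the simulated BSC$(q)$ output, we have the Markov chain $W\to\Xv\to\Ycc\to\Zv$.

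First, I would instantiate Lemma~\ref{lem:bec} with the same nested pair $(\C_0(n),\,\C_1(n))=(\{0,1\}^n,\,\C^{\bot}(n))$ and eavesdropper channel BEC-WT($\epsilon$) with $\epsilon=2q$. The hypothesis $q\ge(1-\delta^{\star})/2$ is exactly $\epsilon\ge 1-\delta^{\star}$, which is condition~(\ref{eq:becc}). The lemma therefore yields an encoder/decoder sequence achieving $(R,R_e)=(R_c,R_c)$ on BEC-WT($2q$); reliability on the noiseless main channel is automatic since $\C_0(n)=\{0,1\}^n$, and perfect secrecy gives $\liminf_{n\to\infty} H(W|\Ycc)/n \ge R_c$.

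Second, I would transfer the equivocation guarantee to the BSC eavesdropper. Since the Markov chain $W\to\Ycc\to\Zv$ holds, the data processing inequality yields $I(W;\Zv)\le I(W;\Ycc)$ and hence $H(W|\Zv)\ge H(W|\Ycc)$. Combined with the trivial upper bound $H(W|\Zv)\le H(W)=nR_c$, this forces $\lim_{n\to\infty} H(W|\Zv)/n = R_c$. Reliability on the main channel is unaffected by the choice of eavesdropper channel, so the very same encoder/decoder achieves $(R,R_e)=(R_c,R_c)$ on BSC-WT($q$).

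The only delicate point is to ensure that the BEC-to-BSC simulation acts as a legitimate stochastic post-processing of the eavesdropper's observation, so that the data processing step is valid. This reduces to verifying that the fill-in randomization is independent of $W$, $\Xv$, and the BEC erasure pattern, which is immediate by construction. No new weight-enumerator or threshold arguments beyond those already embedded in Lemma~\ref{lem:bec} are required; the entire content of Theorem~\ref{th:bsc-new} comes from packaging BSC-from-BEC degradation together with the BEC secrecy result.
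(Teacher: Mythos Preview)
Your proposal is correct and follows essentially the same route as the paper: the paper's proof (one sentence plus a figure) says to mimic the AWGN argument of Appendix~\ref{sec:pft2} by factoring the BSC$(q)$ eavesdropper channel as a BEC$(2q)$ followed by a ``fill-in'' channel, then invoking the data processing inequality and Lemma~\ref{lem:bec} exactly as you describe. Your write-up is in fact more explicit than the paper's, but the decomposition, the Markov chain $W\to\Zv'\to\Zv$, and the reduction to condition~(\ref{eq:becc}) with $\epsilon=2q$ are identical.
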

\begin{proof}
The proof is similar to the one described in Appendix~\ref{sec:pft2} by
constructing an equivalent BSC channel as in Fig.~\ref{fig:eqbsc}.
\end{proof}
\begin{figure}[hbt]
 \centerline{\includegraphics[width=0.6\linewidth,draft=false]{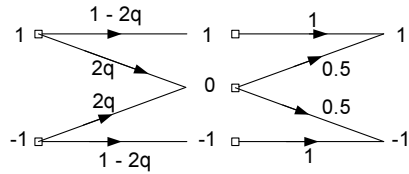}}
  \caption{\small Equivalent BSC channel with crossover probability $q$ }
  \label{fig:eqbsc}
\end{figure}
By using the LDPC code sequence $\{\C_{\rm R}(n)\}$, i.e., setting
$\C_1(n)=\C_{\rm R}^{\bot}(n)$, the achievable (perfect) secrecy rate under
this construction is $2q$, which is better than $-\log_2(1-q)$ derived in
\cite{Thangaraj:ARXIV:05}. We compare the achievable (perfect) secrecy rate
with the secrecy capacity for BSC-WT($q$) in Fig.~\ref{fig:Absc}.
\begin{figure}[hbt]
 \centerline{\includegraphics[width=0.9\linewidth,draft=false]{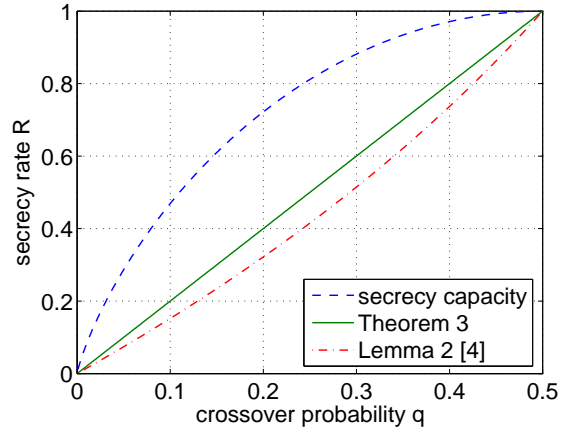}}
  \caption{\small Achievable secrecy rate vs secrecy capacity for BSC-WT($q$)}
  \label{fig:Absc}
\end{figure}

\section{Conclusion}

In this paper, we have addressed the problem of secure coding design for a type
II wiretap channel. A secure error-correcting code has been proposed in terms
of a \emph{nested code} structure. Two secure nested coding schemes have been
studied for a type II AWGN wiretap and the corresponding achievable
rate-equivocation pair has been derived based on the threshold behavior of good
code sequences. Combining the two secure coding schemes, we have established an
achievable rate-equivocation region, which almost covers the secrecy
capacity-equivocation region in this case study. Furthermore, we have also
applied the proposed secure coding scheme to a type II binary symmetric wiretap
channel, and have obtained a new achievable (perfect) secrecy rate, which
improves upon the previous result of \cite{Thangaraj:ARXIV:05}.

\appendix

\subsection{Proof of Theorem~\ref{th:awgn-1}} \label{sec:pft1}

The reliability at the desired receiver can be ensured since the main channel
is noiseless. Now, we calculate only the equivocation:
\begin{align}
H(W|\Zv)&=H(W,\Zv)-H(\Zv) \notag\\
&=H(W,\Xv,\Zv)-H(\Xv|W,\Zv)-H(\Zv)\notag\\
&\ge H(\Xv)-H(\Xv|W,\Zv)-I(\Xv;\Zv)\notag\\
&\ge n-H(\Xv|W,\Zv)-nC_{\rm BI-AWGN}(\lambda).\label{eq:equiv}
\end{align}

In order to calculate the conditional entropy $H(\Xv| W,\Zv)$, we consider the
following situation. Let us fix $W=w$ and assume that the transmitter sends a
codeword $\xv\in\C_w(n)$. Given index $W=w$, the eavesdropper decodes the
codeword $\xv$ based on the received sequence $\zv$. Let $P(\C_w,n)$ denote the
average probability of error under ML decoding at the eavesdropper incurred by
using coset $\C_w(n)$. We note that the code $\C_1(n)$ and its coset $\C_w(n)$
have the same distance properties, and hence, have the same SNR threshold under
ML decoding. Based on the threshold behavior of good codes \cite{rpe-good-IT}
and the condition $\lambda\ge \lambda^{\star}$, we have $\lim_{n\rightarrow
\infty} P(\C_w,n)=0.$ Moveover, Fano's inequality implies that
\begin{align}
\lim_{n\rightarrow \infty} H(\Xv| W,\Zv)/n&\le \lim_{n\rightarrow \infty}
[1/n+P(\C_w,n)R_1]= 0. \label{eq:fano}
\end{align}
Combining (\ref{eq:equiv}) and (\ref{eq:fano}), we have the desired result.

\subsection{Proof of Theorem~\ref{th:awgn-2}} \label{sec:pft2}

To develop the achievable rate-equivocation pair, we consider an equivalent
channel model illustrated in Fig.~\ref{fig:eqwt}. We observe that the
equivalent channel embeds a binary erasure wiretap channel $X\rightarrow
(Y,Z')$, where $Z'$ is the BEC output with alphabet $\{1,0,-1\}$. The proof can
be outlined as follows.

We first construct a BEC-WT($\epsilon$) and an associated channel with
transition probabilities $f_{Z|Z'}$ so that the channel $X\rightarrow Z$ is
equivalent to the original BI-AWGN with SNR $\lambda$.
To this end, we choose the erasure rate $\epsilon$
as follows
\begin{align*}
\epsilon&=\int_{-\infty}^{\infty}\min\bigl[g(z|X=-1),\, g(z|X=1)\bigr]\, dz 
=2Q\bigl(\sqrt{2\lambda}\bigr).
\end{align*}
Let us define transition probabilities $f_{Z|Z'}$ as
\begin{align}
f(z|Z'=1)&=\left\{
            \begin{array}{ll}
              \frac{g(z|X=1)-g(z|X=-1)}{1-\epsilon} & z\ge 0 \\
              0 & z< 0
            \end{array}
          \right.\notag \\
f(z|Z'=0)&=\left\{
            \begin{array}{ll}
            g(z|X=-1)/\epsilon & z \ge 0\\
            g(z|X=1)/\epsilon & z<0
            \end{array}
          \right.\notag \\
f(z|Z'=-1)&=\left\{
            \begin{array}{ll}
              0 & z\ge0\\
            \frac{g(z|X=-1)-g(z|X=1)}{1-\epsilon} & z<0 .
            \end{array}
          \right. \label{eq:zzf}
\end{align}
We can easily verify that $\sum_{z'}p(z'|x)f(z|z')=g(z|x).$ This implies that
the designed concatenated channel is equivalent to the original
AWGN-WT($\lambda$).
\begin{figure}[hbt]
\centerline{\includegraphics[width=0.9\linewidth,draft=false]{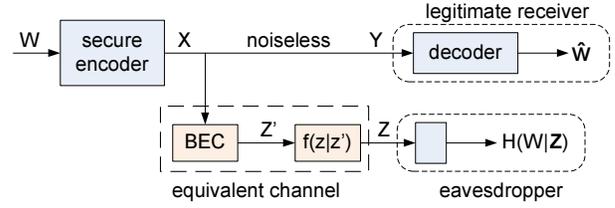}}
  \caption{\small Equivalent type II AWGN wiretap channel}
  \label{fig:eqwt}
\end{figure}

Next, we design secure nested codes for the \emph{upgraded} BEC-WT($\epsilon$).
Note that the confidential message $W$, the BEC output $Z'$, and the received
signal at the eavesdropper $Z$ satisfy the Markov chain $W \rightarrow Z'
\rightarrow Z$. The data processing inequality \cite{Cover} implies that the
normalized equivocation can be bounded as $$H(W|\Zv)/n \ge H(W|\Zv')/n.$$
Finally, we have the desired result by applying Lemma~\ref{lem:bec}.

\bibliographystyle{IEEEtran}
\bibliography{secrecy}

\end{document}